\newtheorem{theorem}{Theorem}[section]
\newtheorem{lemma}[theorem]{Lemma}
\newtheorem{corollary}{Corollary}[theorem]
\newtheorem{definition}{Definition}[theorem]
\title{Controlled Gates in the Clifford Hierarchy}
\author{Jonas T. Anderson \\ \href{mailto:sjonas.tyler.anderson@gmail.com}{jonas.tyler.anderson@gmail.com} \\
   \and Matthew Weippert}
\date{\today}
\begin{document}

\maketitle
\begin{abstract}
    In this note we prove a necessary set of conditions which must be satisfied by any controlled gate in the qubit Clifford Hierarchy. These conditions are straightforward to derive yet quite restricting. We also extend our proofs to gates composed of certain direct sums of unitaries. Finally, we provide some evidence that these conditions are also sufficient.
\end{abstract}

\section{Introduction}
The $n$-qubit Clifford Hierarchy \cite{Gottesman1999, Zeng2008} is recursively defined as 

\begin{equation}\label{CHdef}
    \mathcal{CH}_{k} \equiv \{U | U P U^\dagger \subseteq \mathcal{CH}_{k-1}, \forall P\in\mathcal{P}_n\}
\end{equation}
with the first level ($k=1$) defined as $\mathcal{CH}_1 \equiv \mathcal{P}_n$, the $n$-qubit Pauli group. $\mathcal{CH}_2$ is the $n$-qubit Clifford group. For $k\ge3$, the elements of $\mathcal{CH}_k$ no longer form a group. 

In \cite{Hu2021ClimbingHierarchy} it was noted that one can `climb' the Clifford Hierarchy by starting with non-trivial Pauli gates (rotations about the $X,Y,Z$ axis by $\pi$) and adding a control or taking a square root of the unitary rotation ({\it i.e.}~halving the rotation angle). Either of these operations produces a new gate that is one level higher in the Clifford Hierarchy. Here we look more generally at what conditions are necessary for a controlled gate to be in the Clifford Hierarchy. While the full structure of the Clifford Hierarchy is still not known, the necessary conditions we derive here apply generally to any controlled gate in the qubit Clifford Hierarchy. 

In section \ref{proofs} we provide proofs of two necessary conditions that a controlled-$U$ gate must satisfy to be in the Clifford Hierarchy. In section \ref{sufficient} we provide evidence that these conditions are also sufficient.

It is our intent that these insights can be combined with existing results to further elucidate the structure of the Clifford Hierarchy \cite{Hu2021ClimbingHierarchy, Anderson2024OnHierarchy, Anderson2016ClassificationCodes, Pllaha2020Un-Weyl-ingHierarchy, Rengaswamy2019UnifyingRings, Zeng2008,  Cui2017DiagonalHierarchy, Beigi2010C3Operations, Bengtsson2014OrderHierarchy,  deSilva2021EfficientDimensions, Liu2023ApproximateCorrection}.

\subsection{Fixing Notation}

In what follows, we denote a controlled-$U$ gate by ${\bf C}(U)$. We will write ${\bf C}(U) = |0\rangle\langle 0| \otimes I + |1\rangle\langle 1| \otimes U$ where the control is on the left side of the tensor product. We fix this for convenience, but our proofs still hold for control on other qubits, as SWAP gates (on physical qubits) are Clifford gates and (left or right) multiplication by Clifford gates does not change membership in the Clifford Hierarchy. Later when we depict ${\bf C}(U)$ in a circuit diagram we will always show the control on the topmost wire. Furthermore, we have fixed the control to be `on' when the control qubit is in the $|1\rangle$ state. Our proof holds if the control is on in the $|0\rangle$ state or for control in the $|+\rangle, |-\rangle,|+ i\rangle, |-i\rangle$ states as these on-off or basis changes are equivalent up to multiplication by Clifford gates.

\section{Proof of Necessary Conditions}\label{proofs}

In the following we will prove necessary conditions for membership in the Clifford Hierarchy for any block-diagonal unitary matrix consisting of two $2^{N-1}\times 2^{N-1}$ blocks. We will show that controlled-unitary gates are a subset of these matrices and give that proof as a corollary.  

Let $U = |0\rangle\langle 0| \otimes U_1 + |1\rangle\langle 1| \otimes U_2$ be a $2^N \times 2^N$ block-diagonal unitary matrix consisting of two $2^{N-1}\times 2^{N-1}$ blocks. The blocks $U_1$ and $U_2$ must be unitary since $U$ is unitary if and only if $U_1$ and $U_2$ are unitary.  

\begin{lemma}\label{lem1}$U \in \mathcal{CH}_1 \iff U_1 = \pm U_2$ and $U_1,U_2 \in \mathcal{CH}_1$. 
\end{lemma}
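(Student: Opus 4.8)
The plan is to unpack the definition $\mathcal{CH}_1 = \mathcal{P}_N$ and exploit the rigid tensor structure of Pauli operators. For the ($\Rightarrow$) direction, I would suppose $U \in \mathcal{P}_N$ and write it as $U = P^{(1)} \otimes P'$, where $P^{(1)} \in \{I, X, Y, Z\}$ acts on the control qubit and $P' \in \mathcal{P}_{N-1}$ absorbs the global phase $\alpha \in \{\pm 1, \pm i\}$; every element of $\mathcal{P}_N$ admits such a decomposition. With respect to the $2 \times 2$ block partition along the control qubit, the $(i,j)$ block of $U$ is $(P^{(1)})_{ij}\, P'$. Since $P'$ is a scalar multiple of a Pauli string, it is invertible and hence nonzero, so the hypothesis that $U$ is block-diagonal forces $(P^{(1)})_{01} = (P^{(1)})_{10} = 0$, i.e. $P^{(1)} \in \{I, Z\}$.

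Then I would simply read off the two diagonal blocks: if $P^{(1)} = I$ then $U = \mathrm{diag}(P', P')$, so $U_1 = U_2 = P'$; if $P^{(1)} = Z$ then $U = \mathrm{diag}(P', -P')$, so $U_1 = P'$ and $U_2 = -P'$. In either case $U_1 = \pm U_2$, and since both $P'$ and $-P'$ lie in $\mathcal{P}_{N-1}$, the blocks lie in $\mathcal{CH}_1$ on $N-1$ qubits. The ($\Leftarrow$) direction reverses this cleanly: given $U_1, U_2 \in \mathcal{CH}_1 = \mathcal{P}_{N-1}$ with $U_1 = \varepsilon U_2$ for $\varepsilon \in \{+1,-1\}$, we have $U = |0\rangle\langle 0|\otimes U_1 + |1\rangle\langle 1|\otimes U_2 = (|0\rangle\langle 0| + \varepsilon|1\rangle\langle 1|)\otimes U_1$, which equals $I \otimes U_1$ when $\varepsilon = +1$ and $Z \otimes U_1$ when $\varepsilon = -1$. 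As $I, Z \in \mathcal{P}_1$ and $U_1 \in \mathcal{P}_{N-1}$, this tensor product lies in $\mathcal{P}_N = \mathcal{CH}_1$.

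There is no serious obstacle here; the only points that need care are the bookkeeping of the global phase — which is why $P'$ is allowed to range over the full group $\mathcal{P}_{N-1}$ rather than just unsigned Pauli strings — and the observation that invertibility of $P'$ is precisely what lets the block-diagonal hypothesis pin the control-qubit factor down to $I$ or $Z$. This lemma is really the base case for an induction on the level $k$, and I expect the genuine work to surface in the higher-level statements ($k \ge 2$) and in deducing the controlled-gate corollary from the block-diagonal version.
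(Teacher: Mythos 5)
Your proof is correct, and the forward direction takes a genuinely different (and more elementary) route than the paper's. The paper argues via trace orthogonality: since $\mathcal{P}_N$ is an orthogonal basis, it computes $\mathrm{Tr}\bigl((X\otimes P_{N-1})U\bigr)=0$ and likewise for $Y$ to rule out an $X$ or $Y$ factor on the control qubit, leaving only $I\otimes P_{N-1}$ or $Z\otimes P_{N-1}$. You instead start from the definition of a Pauli string as a tensor product $P^{(1)}\otimes P'$ and observe that the $(i,j)$ block of $U$ is $(P^{(1)})_{ij}P'$, so block-diagonality together with invertibility of $P'$ forces $P^{(1)}$ to be diagonal, i.e.\ $I$ or $Z$. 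Both arguments rest on the same tensor structure, but yours avoids the basis-expansion machinery entirely and makes the mechanism visible (the off-diagonal blocks of $U$ literally \emph{are} the off-diagonal entries of the control-qubit factor times $P'$), while the paper's trace computation is the form one would use if one only knew $U$ lay in the span of $\mathcal{P}_N$ rather than having the factorization in hand. The reverse direction is identical in both, and your attention to the phase bookkeeping (letting $P'$ range over the full group $\mathcal{P}_{N-1}$ including phases) is exactly the point the paper flags as unimportant to the downstream results. One small note: your closing remark frames the lemma as the base case of an induction on the level $k$; in the paper it is instead invoked as the terminal step of a descent argument (Theorems \ref{thm1.1} and \ref{thm2.1} push everything down to $\mathcal{CH}_1$ and then apply this lemma), which amounts to the same role but is worth phrasing accurately.
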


$\mathcal{CH}_1$ is the first level of the qubit Clifford Hierarchy. The elements form a group and consist of Pauli strings {\it i.e.}~tensor products of single-qubit Pauli matrices times a $U(1)$ phase (usually restricted to $i^k$ or $\pm 1$). The details of this phase are not important here and any of these definitions of $\mathcal{CH}_1$ will suffice for our proofs below.

\begin{proof}

$(\implies)$ 

Case 1: $U_1 = U_2 = P \in \mathcal{CH}_1$. Then $U = I\otimes P$ which is in $\mathcal{CH}_1$

Case 2: $U_1 = -U_2 = P \in \mathcal{CH}_1$. Then $U = Z\otimes P$ which is in $\mathcal{CH}_1$

$(\impliedby)$\footnote{This direction was proven by the Quantum StackExchange community \cite{Quantumcomputing.stackexchange.com:Matrices}. We are grateful for this community and hope it continues to thrive.} 

Assume $U$ equals a Pauli matrix and has the form $|0\rangle \langle 0| \otimes U_1 + |1 \rangle \langle 1| \otimes U_2$. The set of Pauli matrices $\mathcal{P}_N := \{I, X, Y, Z\}^{\otimes N}$ is an orthogonal basis for $2^N \times 2^N$ complex matrices, so there is exactly one $P \in \mathcal{P}_N$ for which $\text{Tr}(PU) \neq 0$. And up to a $U(1)$ phase, there is exactly one element of $P \in \mathcal{CH}_1$ with $\text{Tr}(PU) \neq 0$. 

For any $P$ of the form $P = X \otimes P_{N-1}$ (with $P_{N-1} \in \mathcal{P}_{N-1}$) you can compute
\begin{align*}
\text{Tr}(PU) &= \text{Tr}\left( (X \otimes P_{N-1})(|0\rangle \langle 0| \otimes U_1 + |1 \rangle \langle 1| \otimes U_2) \right)
\\&= \text{Tr}(X |0\rangle \langle 0|) \text{Tr}(P_{N-1} U_1) + \text{Tr}(X|1 \rangle \langle 1|)\text{Tr}( P_{N-1} U_2)
\\&=0,
\end{align*}
so $U$ cannot start with $X$. A similar proof follows for $Y$. There will be exactly one $P_{N-1}$ such that $U \in \{I_2 \otimes P_{N-1}, Z \otimes P_{N-1}\}$, {\it i.e.} $U_1 = P_{N-1}$ and $U_2 = \pm P_{N-1}$. 
\end{proof}

In what follows we use $\oplus$ to denote the direct sum of two matrices of equal size. We are currently working only with unitary matrices acting on some number of qubits; therefore, in the equation: $U=U_1\oplus U_2$, we have that $U_1,U_2\in SU(2^k)$ and $U\in SU(2^{k+1})$.

\begin{theorem}\label{thm1.1} $U_1\oplus U_2$ is in the Clifford Hierarchy only if $U_1$ and $U_2$ are in the Clifford Hierarchy. 
\end{theorem}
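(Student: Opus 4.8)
The plan is to prove the contrapositive together with an induction on the Clifford Hierarchy level $k$. Suppose $U = U_1 \oplus U_2 \in \mathcal{CH}_k$; I want to conclude $U_1, U_2 \in \mathcal{CH}_{k}$ (in fact one expects $U_1,U_2\in\mathcal{CH}_{k}$, possibly with a cheaper bound, but membership at \emph{some} level is all the statement asks). The base case $k=1$ is exactly Lemma \ref{lem1}, which already gives $U_1, U_2 \in \mathcal{CH}_1$. For the inductive step, assume the claim holds for $k-1$ and let $U = U_1 \oplus U_2 \in \mathcal{CH}_k$. By definition of the hierarchy, for every Pauli $P \in \mathcal{P}_N$ we have $U P U^\dagger \in \mathcal{CH}_{k-1}$.

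The key idea is to feed the \emph{right} choice of Pauli into this conjugation so that the result is again block-diagonal, and then apply the inductive hypothesis. Writing $N$ for the number of qubits $U$ acts on and keeping the first qubit as the ``block index'', take $P$ of the form $I \otimes Q$ with $Q \in \mathcal{P}_{N-1}$. Then
\begin{equation*}
U (I \otimes Q) U^\dagger = (U_1 \oplus U_2)(Q \oplus Q)(U_1^\dagger \oplus U_2^\dagger) = (U_1 Q U_1^\dagger) \oplus (U_2 Q U_2^\dagger),
\end{equation*}
which is block-diagonal with blocks $U_1 Q U_1^\dagger$ and $U_2 Q U_2^\dagger$, and lies in $\mathcal{CH}_{k-1}$ by hypothesis on $U$. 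By the inductive hypothesis applied to this block-diagonal element, each block $U_i Q U_i^\dagger$ lies in the Clifford Hierarchy; and since $I \otimes Q$ ranges over a generating-and-spanning family as $Q$ ranges over $\mathcal{P}_{N-1}$, we learn that $U_i$ conjugates every element of $\mathcal{P}_{N-1}$ into the Clifford Hierarchy. The remaining point is to promote ``$U_i P U_i^\dagger \in \mathcal{CH}$ for all $P$'' to ``$U_i \in \mathcal{CH}$'': because the hierarchy is a nested union $\mathcal{CH}_1 \subseteq \mathcal{CH}_2 \subseteq \cdots$ and $\mathcal{P}_{N-1}$ is finite, there is a single level $\ell$ with $U_i P U_i^\dagger \in \mathcal{CH}_\ell$ for all $P$, hence $U_i \in \mathcal{CH}_{\ell+1}$.

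I anticipate the main obstacle is bookkeeping rather than a deep difficulty: one must be careful that the induction is on the hierarchy level and not on qubit number, that the inductive hypothesis is stated for \emph{all} block-diagonal unitaries (not just the specific $U$ at hand), and that the ``single level $\ell$'' argument uses finiteness of $\mathcal{P}_{N-1}$ correctly. A secondary subtlety is that the theorem as phrased does not pin down \emph{which} level $U_1, U_2$ land in; I would either leave the conclusion as stated (``in the Clifford Hierarchy''), or, if a quantitative bound is wanted, track that the construction gives $U_1,U_2 \in \mathcal{CH}_k$ when $U\in\mathcal{CH}_k$, since conjugating each $U_i$ by a Pauli yields a block of something in $\mathcal{CH}_{k-1}$, which by induction puts that block in $\mathcal{CH}_{k-1}$, hence $U_i\in\mathcal{CH}_k$. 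Finally, I would note that Theorem \ref{thm1.1} specializes to controlled gates by taking $U_1 = I$, recovering the controlled-$U$ case as the promised corollary.
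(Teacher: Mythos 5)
Your proof is correct, and it reaches the theorem by a genuinely different route than the paper. The paper argues by contradiction: it assumes some $U_i\notin\mathcal{CH}$, introduces a recursive conjugation functional $\mathcal{F}_U[P_1,\dots,P_k]$, unrolls it along a tuple $I\otimes\vec{P}$ long enough to force the result into $\mathcal{CH}_1$, and then invokes Lemma \ref{lem1} to contradict the assumed non-membership. Your argument is a direct induction on the hierarchy level: the base case is Lemma \ref{lem1}, and the inductive step uses the single computation $U(I\otimes Q)U^\dagger=(U_1QU_1^\dagger)\oplus(U_2QU_2^\dagger)\in\mathcal{CH}_{k-1}$ together with the inductive hypothesis applied to that block-diagonal conjugate, followed by the recursive definition of $\mathcal{CH}_k$ to lift ``$U_iQU_i^\dagger\in\mathcal{CH}_{k-1}$ for all $Q\in\mathcal{P}_{N-1}$'' to ``$U_i\in\mathcal{CH}_k$.'' The two proofs share the same engine (conjugation by $I\otimes Q$ preserves block-diagonality, with Lemma \ref{lem1} at the bottom), but your organization buys two things. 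First, it yields the sharper quantitative conclusion $U_1,U_2\in\mathcal{CH}_k$ whenever $U_1\oplus U_2\in\mathcal{CH}_k$, whereas the paper only concludes membership at some finite level; note this makes your closing ``single level $\ell$ by finiteness of $\mathcal{P}_{N-1}$'' step redundant, since every block already lands in $\mathcal{CH}_{k-1}$. Second, it sidesteps a delicate point in the paper's contrapositive, namely the selection of a single infinite sequence of Paulis every finite prefix of which witnesses $\mathcal{F}_{U_i}[\vec{P}]\notin\mathcal{CH}$; your induction never needs such a sequence. The one piece of bookkeeping you should make explicit is the nesting $\mathcal{CH}_1\subseteq\mathcal{CH}_2\subseteq\cdots$, which you implicitly use and which follows from the recursive definition; with that stated, the argument is complete.
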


First, we will introduce some notation. Then, using Lemma \ref{lem1} above, we will prove the theorem.   
\begin{proof}
Let $U_1 \notin \mathcal{CH} \mbox{ or } U_2 \notin \mathcal{CH}$ and assume $U_1 \oplus U_2 \in \mathcal{CH}$ at some finite level. 

Define the following recursive function on tuples of Pauli strings, $P_i$, as follows:
\begin{center}
\begin{align}
    \mathcal{F}_U[P_1] &:= UP_1 U^{\dagger} \\
    \mathcal{F}_U[P_1,P_2] &:= UP_1 U^{\dagger}P_2UP_1 U^{\dagger} = \mathcal{F}_{UP_1U^{\dagger}}[P_2]. \\
    \mathcal{F}_U[P_1,...,P_k] &:= \mathcal{F}_{\mathcal{F}_U[P_1,...,P_{k-1}]}[P_k].
\end{align}
\end{center}

The following is equivalent to the standard definition of the Clifford Hierarchy: 
\begin{definition}
    If a gate $U$ is a member of $\mathcal{CH}$, then for any length tuple of Pauli strings, $\vec{P}$, $\mathcal{F}_{U}[\vec{P}]$ must be in $\mathcal{CH}$. 
    \end{definition}
    
Additionally, if $U \in \mathcal{CH}_k$ (the $k^{\text{th}}$ level of the Clifford Hierarchy), then after recursively applying $\mathcal{F}_U$ $k-1$ times all $\mathcal{F}_{\mathcal{F}_U[P_1,...,P_{k-1}]}[P_k,P_{k+1},...] \in \mathcal{CH}_1$. Conversely, If a gate $U$ is not a member of $\mathcal{CH}$ (at any finite level), then there must exist for any finite length a tuple of Pauli strings for which $\mathcal{F}_{U}[\vec{P}] \notin \mathcal{CH}$.

Recall that $U_1 \mbox{ or } U_2 \notin \mathcal{CH}$. Let $\vec{P}$ be a tuple of Pauli stings such that $\mathcal{F}_{U_i}[\vec{P}] \notin \mathcal{CH}$ for any length. Here $U_i$ denotes the unitary ($U_1$ or $U_2$) which is not in $\mathcal{CH}$. If both $U_1$ and $U_2$ are not in $\mathcal{CH}$ it can denote either unitary. Then define the tuple $I\otimes \vec{P} := (I\otimes P_1, I\otimes P_2, ...)$.

Now, if $U_1\oplus U_2$ is in $\mathcal{CH}$ at some finite level, we must have that $\mathcal{F}_{U_1\oplus U_2}[\vec{P}]$ is in $\mathcal{CH}$ for any tuple of Pauli strings. Furthermore, since $U_1\oplus U_2$ is in some finite level of the Clifford Hierarchy, we must eventually have that $\mathcal{F}_{\mathcal{F}_{U_1\oplus U_2}}[\vec{P}] \in \mathcal{CH}_1$ for all tuples of Pauli strings, $\vec{P}$. Choose $I\otimes \vec{P_k}$ to be of length long enough such that $\mathcal{F}_{U_1\oplus U_2}[I\otimes \vec{P_k}] \in \mathcal{CH}_1$. Next, we will examine how $\mathcal{F}_{U_1\oplus U_2}$ acts on Pauli strings of this form:

\begin{equation}
    \mathcal{F}_{U_1\oplus U_2}[I\otimes P] = (U_1\oplus U_2) I \otimes P (U_1^\dagger\oplus U_2^\dagger)
\end{equation}
which acts as follows on the blocks:
\begin{equation}
\begin{pmatrix}
U_1 & 0 \\
0 & U_2
\end{pmatrix}
\begin{pmatrix}
P & 0 \\
0 & P
\end{pmatrix}
\begin{pmatrix}
U_1^\dagger & 0 \\
0 & U_2^{\dagger}
\end{pmatrix}
= 
\begin{pmatrix}
U_1PU_1^\dagger & 0 \\
0 & U_2PU_2^{\dagger}
\end{pmatrix}. \\
\end{equation}

We can then see that $\mathcal{F}_{U_1\oplus U_2}[I\otimes \vec{P_k}]$ acts as follows: 
\begin{align}
    \mathcal{F}_{U_1\oplus U_2}[I\otimes \vec{P_k}] = 
\begin{pmatrix}
\mathcal{F}_{U_1}[\vec{P_k}] & 0 \\
0 & \mathcal{F}_{U_2}[\vec{P_k}]
\end{pmatrix}. \\
\end{align}

Since we have assumed that $U_1\oplus U_2\in\mathcal{CH}$ and that $\vec{P_k}$ is of sufficient length such that $\mathcal{F}_{U_1\oplus U_2}[I\otimes \vec{P_k}] \in \mathcal{CH}_1$, we have from Lemma~\ref{lem1} that $\mathcal{F}_{U_1}[\vec{P_k}] = \pm \mathcal{F}_{U_2}[\vec{P_k}] \in \mathcal{CH}_1$. But since $U_i\notin \mathcal{CH}$, $\mathcal{F}_{U_i}[\vec{P}]$ should not be in $\mathcal{CH}$ for any finite length of $\vec{P}$. We have arrived at a contradiction and must conclude that $U_1\oplus U_2\notin\mathcal{CH}$.
\end{proof}

\begin{corollary}
    \label{cor1} ${\bf C}(U)$, the controlled-$U$ gate, is in the Clifford Hierarchy only if $U$ is in the Clifford Hierarchy. 
\end{corollary}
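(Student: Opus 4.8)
The plan is to recognize ${\bf C}(U)$ as a particular instance of the block-diagonal form already analyzed, and then invoke Theorem~\ref{thm1.1} directly. Concretely, by the convention fixed in the notation section we have ${\bf C}(U) = |0\rangle\langle 0| \otimes I + |1\rangle\langle 1| \otimes U$, which is exactly the direct sum $I \oplus U$ with $U_1 = I$ and $U_2 = U$; both blocks are unitary, so this is a legitimate element of the family of matrices to which Theorem~\ref{thm1.1} applies. Thus if ${\bf C}(U) \in \mathcal{CH}$ at some finite level, Theorem~\ref{thm1.1} immediately forces both $I$ and $U$ to lie in the Clifford Hierarchy.

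The only remaining point to observe is that the condition on the first block is vacuous: $I \in \mathcal{CH}_1 \subseteq \mathcal{CH}$ (it is the identity Pauli string, up to phase), so no information is lost there. Hence the single surviving necessary condition is $U \in \mathcal{CH}$, which is precisely the statement of the corollary. One should also remark — as the notation section already establishes — that placing the control on a wire other than the topmost, or toggling the ``on'' state to $|0\rangle$ or to one of the $|\pm\rangle, |\pm i\rangle$ states, only composes ${\bf C}(U)$ with Clifford gates (SWAPs and local basis changes), and left/right multiplication by Cliffords preserves membership in $\mathcal{CH}$; so the conclusion is independent of these conventions.

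I do not expect a genuine obstacle here, since all the work is done by Theorem~\ref{thm1.1}; the proof is essentially a one-line specialization plus the trivial observation that $I \in \mathcal{CH}$. If anything, the only thing to be slightly careful about is to state explicitly that ${\bf C}(U)$ is block-diagonal in the control basis with the asserted blocks, so that the hypotheses of Theorem~\ref{thm1.1} are literally met.
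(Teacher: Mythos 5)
Your proposal is correct and follows essentially the same route as the paper: both specialize Theorem~\ref{thm1.1} to the decomposition ${\bf C}(U) = I \oplus U$, note that the condition on the block $I \in \mathcal{CH}_1$ is automatic, and appeal to Clifford (SWAP/basis-change) multiplication to reduce to this standard form. No gaps.
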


\begin{proof}
Left and right multiplication by Clifford gates does not change the level of a gate (or membership) in the Clifford Hierarchy. Multiplying by SWAP gates which are Clifford, a controlled-unitary gate, ${\bf C}(U)$, can always be written as $I\oplus U$. In the proof of Theorem \ref{thm1.1} above set $U_1 = I$. Then it follows that $U_1 \oplus U_2$ is in $\mathcal{CH}$ at some finite level only if $U_1$ and $U_2 = U$ are in $\mathcal{CH}$ at some finite level. Since $U_1 = I$ is in $\mathcal{CH}_1$ (the Pauli group) we conclude that ${\bf C}(U)$ is in $\mathcal{CH}$ at some finite level only if $U_2 = U$ is in $\mathcal{CH}$ at some finite level. 
\end{proof}

In the next proof we will make use of the following identity which holds for any gate acting on qubits of the form: $U_1\oplus U_2$. We include it here as a lemma. 

Recall that $U_1, U_2$ are defined to be equally-sized $2^{k-1}$ unitary matrices. The lemma below holds for all such $U_1, U_2$ with the understanding that the Pauli string $X\otimes I$ is shorthand for $X\otimes(\bigotimes_{k-1}I)$.

\begin{lemma}\label{AB}
\begin{align*}
(U_1\oplus U_2)(X\otimes I)(U_1^\dagger\oplus U_2^\dagger) = &
\begin{pmatrix}
U_1 & 0 \\
0 & U_2
\end{pmatrix} 
\begin{pmatrix}
0 & I \\
I & 0
\end{pmatrix} 
\begin{pmatrix}
U_1^\dagger & 0 \\
0 & U_2^\dagger
\end{pmatrix} \\
= 
\begin{pmatrix}
U_1U_2^\dagger & 0 \\
0 & U_2 U_1^\dagger
\end{pmatrix} 
\begin{pmatrix}
0 & I \\
I & 0
\end{pmatrix}
= &
\begin{pmatrix}
U_1U_2^\dagger & 0 \\
0 & U_2 U_1^\dagger
\end{pmatrix} 
(X\otimes I).
\end{align*}
\end{lemma}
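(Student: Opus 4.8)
The plan is to treat this purely as a block-matrix identity and verify it by direct multiplication, reading off the blocks. The one thing to keep straight is the order of the three factors and which factor multiplies from the left versus the right, since $U_1\oplus U_2$ and $U_1^\dagger\oplus U_2^\dagger$ play asymmetric roles here.

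First I would fix the block decomposition: writing $N$-qubit operators in terms of the first (control-like) qubit, we have $U_1\oplus U_2 = \left(\begin{smallmatrix} U_1 & 0 \\ 0 & U_2\end{smallmatrix}\right)$, its conjugate transpose $\left(\begin{smallmatrix} U_1^\dagger & 0 \\ 0 & U_2^\dagger\end{smallmatrix}\right)$, and $X\otimes I = \left(\begin{smallmatrix} 0 & I \\ I & 0\end{smallmatrix}\right)$, where each block is $2^{k-1}\times 2^{k-1}$ so that all the products below are well defined. I would then carry out the multiplication in two steps. Multiplying the left two factors gives $\left(\begin{smallmatrix} U_1 & 0 \\ 0 & U_2\end{smallmatrix}\right)\left(\begin{smallmatrix} 0 & I \\ I & 0\end{smallmatrix}\right) = \left(\begin{smallmatrix} 0 & U_1 \\ U_2 & 0\end{smallmatrix}\right)$; multiplying this on the right by $\left(\begin{smallmatrix} U_1^\dagger & 0 \\ 0 & U_2^\dagger\end{smallmatrix}\right)$ yields $\left(\begin{smallmatrix} 0 & U_1U_2^\dagger \\ U_2U_1^\dagger & 0\end{smallmatrix}\right)$, which matches the claimed middle expression once one notes $\left(\begin{smallmatrix} U_1U_2^\dagger & 0 \\ 0 & U_2U_1^\dagger\end{smallmatrix}\right)\left(\begin{smallmatrix} 0 & I \\ I & 0\end{smallmatrix}\right) = \left(\begin{smallmatrix} 0 & U_1U_2^\dagger \\ U_2U_1^\dagger & 0\end{smallmatrix}\right)$.

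Finally I would rewrite the off-diagonal result as the product of a block-diagonal unitary with $X\otimes I$, i.e.\ $\left(\begin{smallmatrix} 0 & U_1U_2^\dagger \\ U_2U_1^\dagger & 0\end{smallmatrix}\right) = \bigl((U_1U_2^\dagger)\oplus(U_2U_1^\dagger)\bigr)(X\otimes I)$, which is exactly the stated right-hand side; unitarity of $U_1U_2^\dagger$ and $U_2U_1^\dagger$ is immediate from unitarity of $U_1,U_2$. There is no substantive obstacle in this lemma: the only place to be careful is not swapping the order of the two outer factors (so that $U_1$ ends up paired with $U_2^\dagger$ and not $U_1^\dagger$), and remembering that the shorthand $X\otimes I$ really means $X\otimes\bigl(\bigotimes_{k-1}I\bigr)$ so the block sizes are as asserted. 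I would present the computation as the displayed chain of equalities already given in the statement, with a sentence identifying each equality as a block multiplication.
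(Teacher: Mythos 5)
Your computation is correct and is essentially identical to the paper's treatment, which simply presents the chain of block-matrix equalities as the content of the lemma itself. The intermediate step $\left(\begin{smallmatrix} 0 & U_1U_2^\dagger \\ U_2U_1^\dagger & 0\end{smallmatrix}\right)$ and the final rewriting as $\bigl((U_1U_2^\dagger)\oplus(U_2U_1^\dagger)\bigr)(X\otimes I)$ match the paper's displayed equalities exactly.
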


\begin{theorem}\label{thm2.1}
$U_1 \oplus U_2$ is in the Clifford Hierarchy only if $A=(U_1 U_2^\dagger)^{2^{m}}=\pm(U_2 U_1^\dagger)^{2^{m}}=B$ and both $A$ and $B$ are Pauli (a member of $\mathcal{CH}_1$) for some integer $m>0$.\end{theorem}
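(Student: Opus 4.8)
\emph{Proof proposal.} The plan is to run the same machinery as in Theorem~\ref{thm1.1}, but to feed $\mathcal{F}_U$ the \emph{specific} Pauli string $X\otimes I$ at every step and to use Lemma~\ref{AB} to see that each round of conjugation squares the relevant block. Write $U=U_1\oplus U_2$ and set $V:=U_1U_2^{\dagger}$, so that $V^{\dagger}=U_2U_1^{\dagger}$; Lemma~\ref{AB} is exactly the identity $\mathcal{F}_U[X\otimes I]=(V\oplus V^{\dagger})(X\otimes I)$.

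First I would prove, by induction on $j\ge 1$, that
\[
\mathcal{F}_U\big[(X\otimes I)^{(j)}\big]=\big(V^{2^{j-1}}\oplus (V^{2^{j-1}})^{\dagger}\big)(X\otimes I),
\]
where $(X\otimes I)^{(j)}$ denotes the length-$j$ tuple all of whose entries equal $X\otimes I$. The base case $j=1$ is Lemma~\ref{AB}. For the inductive step, write $G:=\mathcal{F}_U[(X\otimes I)^{(j-1)}]=(W\oplus W^{\dagger})(X\otimes I)$ with $W=V^{2^{j-2}}$, use $\mathcal{F}_U[(X\otimes I)^{(j)}]=\mathcal{F}_G[X\otimes I]=G(X\otimes I)G^{\dagger}$, and simplify via $(X\otimes I)^{\dagger}=X\otimes I$, $(X\otimes I)^{2}=I$, and the block identity $(W\oplus W^{\dagger})(X\otimes I)(W^{\dagger}\oplus W)=(W^{2}\oplus (W^{\dagger})^{2})(X\otimes I)$; since $W^{2}=V^{2^{j-1}}$, this closes the induction. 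The content of this step is that conjugating by the block-swap $X\otimes I$ (rather than by an arbitrary Pauli, as in Theorem~\ref{thm1.1}) sends the block pair $(W,W^{\dagger})$ to $(W^{2},(W^{\dagger})^{2})$ on each pass, which is precisely where the exponent $2^{m}$ will come from.

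Next, suppose $U_1\oplus U_2\in\mathcal{CH}_k$ for some finite $k$. I would pick an integer $j\ge\max(k-1,2)$ and set $m:=j-1\ge 1$. By the level-dropping property of $\mathcal{F}$ recorded after its definition, $\mathcal{F}_U[(X\otimes I)^{(j)}]\in\mathcal{CH}_1$. Since $\mathcal{CH}_1$ is a group containing $X\otimes I$, right-multiplying by $X\otimes I$ and using $(X\otimes I)^{2}=I$ gives
\[
A\oplus B=\big(U_1U_2^{\dagger}\big)^{2^{m}}\oplus\big(U_2U_1^{\dagger}\big)^{2^{m}}=\mathcal{F}_U\big[(X\otimes I)^{(j)}\big]\,(X\otimes I)\in\mathcal{CH}_1 .
\]
Finally, applying Lemma~\ref{lem1} to the block-diagonal element $A\oplus B\in\mathcal{CH}_1$ yields $A=\pm B$ and $A,B\in\mathcal{CH}_1$, which is exactly the assertion of the theorem.

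The only delicate part is the bookkeeping in the inductive identity — in particular tracking daggers through $(G(X\otimes I))^{\dagger}=(X\otimes I)G^{\dagger}$ and through the two $2\times 2$ block multiplications — but this is a routine calculation. The substantive idea, and the one doing all the work, is the choice of $X\otimes I$ as the repeated Pauli: its involutivity together with the block-swap structure of Lemma~\ref{AB} forces the exponents to double, so that finiteness of the level $k$ pins $V^{2^{m}}$ down to a Pauli once $m$ is taken large enough.
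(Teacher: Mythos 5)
Your proposal is correct and follows essentially the same route as the paper's proof: repeated conjugation by $X\otimes I$ via Lemma~\ref{AB} to double the exponent of $U_1U_2^\dagger$ at each step, descent to $\mathcal{CH}_1$ by finiteness of the level, and then Lemma~\ref{lem1} applied to the resulting block-diagonal Pauli. Your version is a bit more explicit about the induction and the choice of $j$ (hence $m>0$), but the substance is identical.
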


An $n$-qubit gate, $V$, is in $\mathcal{CH}$ at level $k$ if $VPV^\dagger$ is in $\mathcal{CH}$ at level $k-1$ for all $n$-qubit Pauli strings $P$. 

For a gate, $U$, which acts on $n$ qubits to be contained in the Clifford Hierarchy at level $k$ ($\mathcal{CH}_k$), it must have $U P_i U^\dagger\in\mathcal{CH}_{k-1}$ for all $n$-qubit Pauli strings $P_i$. This requirement is both necessary and sufficient for membership in $\mathcal{CH}_k$.  

\begin{proof}
Now, since a gate $U=U_1\oplus U_2$ must satisfy the same requirements for membership in $\mathcal{CH}_k$, we can see (using Lemma~\ref{AB} above) that for $U$ to be in $\mathcal{CH}_k$ (for some finite level $k$) it is necessary that 

\begin{equation*}
\begin{array}{cc}
\begin{pmatrix}
U_1U_2^\dagger & 0 \\
0 & U_2 U_1^\dagger
\end{pmatrix} 
(X\otimes I).
\end{array}
\in \mathcal{CH}_{k-1}.
\end{equation*}

But for this to be true it must also hold that 
\begin{equation*}
\begin{array}{cc}
\begin{pmatrix}
U_1U_2^\dagger & 0 \\
0 & U_2 U_1^\dagger
\end{pmatrix} 
(X\otimes I)(X\otimes I)(X\otimes I)
\begin{pmatrix}
U_2U_1^\dagger & 0 \\
0 & U_1 U_2^\dagger
\end{pmatrix}
= \\ 
\\
\begin{pmatrix}
(U_1U_2^\dagger)^2 & 0 \\
0 & (U_2 U_1^\dagger)^2
\end{pmatrix} 
(X\otimes I)
\in \mathcal{CH}_{k-2}.
\end{array}
\end{equation*}
By repeated application of this identity, we see that for $U=U_1\oplus U_2$ to be in $\mathcal{CH}$ all gates of the following form must also be in $\mathcal{CH}$ (or more specifically in $\mathcal{CH}_{k-m}$):
\begin{equation}\label{gateForm2}
\begin{array}{cc}
\begin{pmatrix}
(U_1U_2^\dagger)^{2^m} & 0 \\
0 & (U_2 U_1^\dagger)^{2^m}
\end{pmatrix} 
(X\otimes I)
\in \mathcal{CH}_{k-m}.
\end{array}
\end{equation}

Note that $(X\otimes I)$ can be ignored in Equ.~\ref{gateForm2} since multiplication by Clifford gates does not change the level of a gate in the Clifford Hierarchy.

Since we have assumed that $U$ is in $\mathcal{CH}$ at some finite level, for some choice of $m$ in Eqn.~\ref{gateForm2} the gate must be in the first level of $\mathcal{CH}$ which contains only Pauli gates. From Lemma~\ref{lem1} we conclude given such an $m$ that $(U_1U_2^\dagger)^{2^m} = \pm (U_2 U_1^\dagger)^{2^m}$ and that both $(U_1U_2^\dagger)^{2^m}$ and $\pm (U_2 U_1^\dagger)^{2^m}$ are Pauli.
\end{proof}
 
\begin{corollary}\label{cor2.4}
A controlled-unitary gate, ${\bf C}(U)$, is in the Clifford Hierarchy only if $U^{2^m}$ is Pauli for some $m>0$ and $U \in \mathcal{CH}$ at some finite level.
\end{corollary}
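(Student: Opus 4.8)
The plan is to obtain Corollary~\ref{cor2.4} as a direct specialization of Theorem~\ref{thm2.1}, exactly as Corollary~\ref{cor1} was obtained from Theorem~\ref{thm1.1}. First I would recall that, up to left and right multiplication by SWAP gates (which are Clifford and hence do not affect membership or level in $\mathcal{CH}$), any controlled-unitary gate ${\bf C}(U)$ can be written in block-diagonal form $I \oplus U$. So I would set $U_1 = I$ and $U_2 = U$ and simply feed this into Theorem~\ref{thm2.1}.

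With that substitution, the quantities appearing in Theorem~\ref{thm2.1} collapse: $U_1 U_2^\dagger = U^\dagger$ and $U_2 U_1^\dagger = U$, so the theorem's conclusion becomes that ${\bf C}(U) \in \mathcal{CH}$ only if $(U^\dagger)^{2^m} = \pm U^{2^m}$ and both $(U^\dagger)^{2^m}$ and $U^{2^m}$ are Pauli, for some integer $m > 0$. The two conditions $(U^\dagger)^{2^m}$ Pauli and $U^{2^m}$ Pauli are equivalent, since the Pauli group is closed under taking adjoints (the adjoint of a Pauli string is again a Pauli string up to a phase); and once $U^{2^m}$ is Pauli, the relation $(U^{2^m})^\dagger = \pm U^{2^m}$ is automatic, since every Pauli string $P$ satisfies $P^\dagger = \pm P$ (indeed $P^\dagger = P$ for Hermitian representatives $\{I,X,Y,Z\}^{\otimes N}$, and at worst a sign for other phase conventions). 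Hence the full content of the specialized Theorem~\ref{thm2.1} reduces to the single statement: $U^{2^m}$ is Pauli for some $m > 0$. Finally, I would invoke Corollary~\ref{cor1} to add the second clause, namely that ${\bf C}(U) \in \mathcal{CH}$ only if $U \in \mathcal{CH}$ at some finite level, giving the stated conjunction.

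I do not anticipate a serious obstacle here; the proof is essentially bookkeeping. The only point that requires a moment's care is the reduction of the two-sided Pauli condition $(U^\dagger)^{2^m} = \pm U^{2^m}$, both Pauli, down to the single clean condition ``$U^{2^m}$ is Pauli,'' which relies only on the elementary facts that the Pauli group is closed under adjoints and that Pauli strings are (up to phase) self-adjoint. One should also double-check that the side on which the control sits, and the choice of which computational basis state activates the control, genuinely does not matter — but this is already handled by the remarks in the Fixing Notation subsection and by the SWAP/Clifford-invariance argument, so no new work is needed.

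\begin{proof}
By left and right multiplication by SWAP gates, which are Clifford and therefore change neither membership nor level in $\mathcal{CH}$, the gate ${\bf C}(U)$ can be written in block-diagonal form $I \oplus U$. Applying Theorem~\ref{thm2.1} with $U_1 = I$ and $U_2 = U$, and noting that then $U_1 U_2^\dagger = U^\dagger$ and $U_2 U_1^\dagger = U$, we conclude that ${\bf C}(U) \in \mathcal{CH}$ only if, for some integer $m > 0$, both $(U^\dagger)^{2^m}$ and $U^{2^m}$ are Pauli and $(U^\dagger)^{2^m} = \pm U^{2^m}$. Since the Pauli group $\mathcal{CH}_1$ is closed under taking adjoints, $(U^\dagger)^{2^m} = (U^{2^m})^\dagger$ is Pauli if and only if $U^{2^m}$ is Pauli; and whenever $U^{2^m}$ is a Pauli string $P$, the relation $P^\dagger = \pm P$ holds automatically. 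Thus the condition furnished by Theorem~\ref{thm2.1} is equivalent to: $U^{2^m}$ is Pauli for some $m > 0$. Finally, Corollary~\ref{cor1} gives that ${\bf C}(U) \in \mathcal{CH}$ only if $U \in \mathcal{CH}$ at some finite level. Combining the two necessary conditions yields the claim.
\end{proof}
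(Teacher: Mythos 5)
Your proposal is correct and follows essentially the same route as the paper: write ${\bf C}(U)$ as $I\oplus U$ up to Clifford (SWAP) multiplication, specialize Theorem~\ref{thm2.1} with $U_1=I$, $U_2=U$ to get that $U^{2^m}$ is Pauli, and combine with Corollary~\ref{cor1} for membership of $U$ in $\mathcal{CH}$. Your explicit reduction of the two-sided condition $(U^\dagger)^{2^m}=\pm U^{2^m}$ (both Pauli) to the single clause ``$U^{2^m}$ is Pauli'' is a small bit of bookkeeping the paper elides, but it changes nothing substantive.
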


\begin{proof}
As previously discussed up to multiplication by Clifford gates, a controlled-unitary gate can be written as $I\oplus U$. We can then set $U_1 = I$ in Theorem \ref{thm2.1}. Our necessary conditions imply that for ${\bf C}(U)$ to be in the Clifford Hierarchy, we must have that $(U)^{2^m}$ is Pauli for some $m>0$. Up to a global phase all Pauli strings are order 2. Therefore, we have that (up to a phase) $(U)^{2^{m+1}} = I$ and we conclude that only controlled-unitary gates with $U \in \mathcal{CH}$ and with $U$ of order $2^m$ (for some integer $m>0$) can be in the Clifford Hierarchy.
\end{proof}

\section{But is it sufficient?}\label{sufficient}
In \cite{Hu2021ClimbingHierarchy} it was noted that one can `climb' the Clifford Hierarchy by starting with non-trivial Pauli matrices (rotations about the $X,Y,Z$ axis by $\pi$) and adding a control or taking a square root of the unitary rotation. In this section we denote these Pauli matrices by $X,Y,$ and $Z$.

Using the constraints derived above, we can look at what other single-qubit Clifford gates can climb the Clifford Hierarchy by adding a control. We will show that controlled single-qubit Clifford gates are in the Clifford Hierarchy if and only if they are order $2^k$ for some integer $k\ge 0$. 

We refer to the following set of single-qubit Clifford gates as Hadamard-like:
\begin{equation}
    \mathcal{S}_H\equiv \{\sigma_1 e^{i(\pi/4) \sigma_2}\;|\; \forall \sigma_1, \sigma_2 \in \{X,Y,Z\} \mbox{ and } \sigma_1 \ne \sigma_2\}. 
\end{equation}
Hadamard-like gates are all order two so a controlled Hadamard-like gate satisfies our necessary conditions for being in $\mathcal{CH}$. With a few circuit identities, it can be shown that adding a control to any element in $\mathcal{S}_H$ produces a gate in the third level of $\mathcal{CH}$. Furthermore, adding additional controls continues to ratchet any Hadamard-like gate up the Clifford Hierarchy by one level for each control. 

We refer to the following set of single-qubit Clifford gates as odd-order:
\begin{equation}
    \mathcal{S}_{O}\equiv\{\sigma_1 e^{i(\pi/4) \sigma_2} e^{i(\pi/4) \sigma_3}\;|\; \forall \sigma_1 \in \{I,X,Y,Z\}, \sigma_2,\sigma_3 \in \{X,Y,Z\}\mbox{ and } \sigma_2 \ne \sigma_3\}.
\end{equation}
It is easy to verify that all elements of $\mathcal{S}_O$ are odd-order. From this fact alone we can see that these gates cannot hope to climb the Clifford Hierarchy and can only dream of the summit above. Or less poetically: No controlled elements of $\mathcal{S}_O$ are contained in the Clifford Hierarchy. 

For completeness we define the set of order-four Clifford gates:
\begin{equation}
    \mathcal{S}_4\equiv \{e^{i(\pi/4) \sigma}\;|\; \forall \sigma \in \{X,Y,Z\}\}. 
\end{equation}
It is known that adding controls to elements in $\mathcal{S}_4$ creates gates in the third level of the Clifford Hierarchy\cite{Hu2021ClimbingHierarchy}.

The union of the distinct sets: $\mathcal{S}_H, \mathcal{S}_O,$ and $\mathcal{S}_4$, along with the single-qubit Pauli matrices, constitute all single-qubit gates in the Clifford group. Therefore, for one-qubit Clifford gates we have shown our conditions are necessary and sufficient. 

We can also check if our conditions are necessary for diagonal gates. The diagonal gates in the Clifford Hierarchy are known and they are are all of order $2^k$ for some integer $k\ge 0$. It is also known that adding a control to a diagonal gate in the Clifford Hierarchy produces a gate which is one level up in the Clifford Hierarchy \cite{Hu2021ClimbingHierarchy, Cui2017DiagonalHierarchy}. Hence, these conditions are also necessary and sufficient for diagonal gates. 

The complete structure of the Clifford Hierarchy is not fully known, nevertheless it would be interesting to prove the sufficiency of the necessary conditions found here. Even if sufficiency can be proven, we have not yet proven whether or not ${\bf C}(U)$, when $U$ satisfies the appropriate conditions, is always one level higher in the Clifford Hierarchy. Or more mathematically: If $U\in \mathcal{CH}_k\backslash \mathcal{CH}_{k-1}$, is ${\bf C}(U) \in \mathcal{CH}_{k+1}\backslash \mathcal{CH}_{k}$? We leave this as an open problem.   

\section{Extensions to Qudits}
Lemma \ref{lem1} and Theorem \ref{thm1.1} can easily be generalized to the (prime dimension, $d$) qudit case. Any controlled-$U$ gate which is block diagonal (with $d$, $d\times d$ blocks) with $U$ appearing explicitly in at least one $d\times d$ block, should suffice for Theorem \ref{thm1.1} to hold. At this time, we do not have a generalization of Theorem \ref{thm2.1} to qudits. It is tempting to conjecture for the qudit case that ${\bf C}(U)$ is in the Clifford Hierarchy only if $U$ is order $d^m$ where $d$ is the local qudit dimension and $m$ is a non-negative integer. We leave this as an open problem. 

\appendix

\section{Alternative Proofs}

Here we present a proof of Corollary~\ref{cor2.4} using circuit diagrams. This alternative proof may be more intuitive for some readers. 

First, we will make use of the following circuit identity which holds for any controlled-$U$ gate on qubits, ${\bf C}(U)$. We include it here as a lemma.

\begin{lemma}\label{circ}
\begin{equation*}
\begin{array}{ccc}
\Qcircuit @C=0.5em @R=1.3em {
& \ctrl{1} & \qw & \gate{X} & \qw & \ctrl{1} & \qw \\
& \gate{U} & \qw & \qw & \qw & \gate{U^\dagger} & \qw \\
}
\end{array}
= 
\begin{array}{ccc}
\Qcircuit @C=0.5em @R=1.3em {
& \ctrl{1} & \gate{X} & \qw \\
& \gate{U^2} & \gate{U^\dagger} & \qw \\
}
\end{array}.
\end{equation*}
\end{lemma}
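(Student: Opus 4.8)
The plan is to verify this circuit identity by a direct block-matrix computation, using the convention fixed in Section~\ref{proofs} that the control sits on the top wire. In that representation every gate in the two circuits becomes a $2\times 2$ array of operators acting on the target register: ${\bf C}(V)=\begin{pmatrix}I&0\\0&V\end{pmatrix}$ for any unitary $V$, an $X$ applied to the control wire alone is $\begin{pmatrix}0&I\\I&0\end{pmatrix}$, and an \emph{uncontrolled} $U^\dagger$ applied to the target wire alone is $\begin{pmatrix}U^\dagger&0\\0&U^\dagger\end{pmatrix}$.

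Next I would translate each circuit diagram into a matrix product, remembering that reading a circuit left-to-right corresponds to multiplying matrices right-to-left. The left-hand circuit is then ${\bf C}(U^\dagger)\,(X\otimes I)\,{\bf C}(U)$ and the right-hand circuit is $(X\otimes U^\dagger)\,{\bf C}(U^2)$. Multiplying out the blocks on each side, and simplifying with $U^\dagger U=I$ on the left and $U^\dagger U^2=U$ on the right, both products collapse to the single block matrix $\begin{pmatrix}0&U\\U^\dagger&0\end{pmatrix}$, which establishes the identity.

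For readers who prefer a pictorial argument, I would note that the same fact follows from inserting $X^2=I$ on the control wire of the left-hand side: this turns ${\bf C}(U^\dagger)\,X\,{\bf C}(U)$ into an $X$ on the control followed by the composition of an \emph{anti}-controlled $U^\dagger$ with a controlled $U$, a composite that applies $U^\dagger$ on the $|0\rangle$ branch and $U$ on the $|1\rangle$ branch, i.e.\ it equals $\mathrm{diag}(U^\dagger,U)$. The right-hand side likewise equals an $X$ on the control followed by $(I\otimes U^\dagger)\,{\bf C}(U^2)=\mathrm{diag}(U^\dagger,U^\dagger U^2)=\mathrm{diag}(U^\dagger,U)$, so the two coincide. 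There is no genuine obstacle here beyond bookkeeping; the only points requiring care are the circuit-order versus matrix-order convention and the fact that the trailing $U^\dagger$ on the right-hand side is uncontrolled, so it multiplies \emph{both} blocks of ${\bf C}(U^2)$ rather than just the lower one.
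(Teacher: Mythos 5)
Your verification is correct and matches the paper's approach: the paper states Lemma~\ref{circ} without an explicit proof, but its main-text counterpart, Lemma~\ref{AB}, is established by exactly this block-matrix computation (conjugating $X\otimes I$ by the block-diagonal gate), and both circuits indeed reduce to $\bigl(\begin{smallmatrix}0&U\\U^\dagger&0\end{smallmatrix}\bigr)$. Both of your routes also correctly handle the one subtle point, namely that the trailing $U^\dagger$ on the right-hand side is uncontrolled and therefore multiplies both blocks of ${\bf C}(U^2)$.
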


\begin{theorem}\label{adx2}
${\bf C}(U)$, the controlled-$U$ gate, is in the Clifford Hierarchy only if $U$ is order $2^m$ for some integer $m>0$ and $U^{2^{m-1}}$ is a Pauli string (a member of $\mathcal{CH}_1$).\end{theorem}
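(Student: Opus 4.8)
The plan is to re-run the induction behind Theorem~\ref{thm2.1} and Corollary~\ref{cor2.4}, but with the circuit identity Lemma~\ref{circ} as the per-step engine in place of the algebraic Lemma~\ref{AB}. Up to multiplication by SWAP gates, which are Clifford and leave the level unchanged, we may write ${\bf C}(U)=I\oplus U$ with the control on top, and we assume ${\bf C}(U)\in\mathcal{CH}_k$ for some finite $k$. Besides Lemma~\ref{circ}, the only facts used are the two standing facts of this paper: conjugation by the Pauli $X\otimes I$ maps $\mathcal{CH}_k$ into $\mathcal{CH}_{k-1}$, and left or right multiplication by the Clifford $X\otimes I$ preserves the level.

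First I would work out the base step: conjugating ${\bf C}(U)$ by the control-line Pauli $X\otimes I$. Reading the diagram of Lemma~\ref{circ} as an operator equation, and applying it with $U$ replaced by $U^\dagger$ so that the circuit on the left is ${\bf C}(U)(X\otimes I){\bf C}(U)^\dagger$ rather than ${\bf C}(U)^\dagger(X\otimes I){\bf C}(U)$ (this is the plain definition of the hierarchy, so no appeal to adjoint-closure of $\mathcal{CH}_k$ is needed), the conjugate equals the circuit ``${\bf C}(U^{-2})$, then $X$ on the control and $U$ on the target''; after absorbing the Clifford $X\otimes I$ this is precisely the two-block gate $U\oplus U^{-1}$, which therefore lies in $\mathcal{CH}_{k-1}$. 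Since $U\oplus U^{-1}$ is again a two-block diagonal unitary, the same move applies to it.

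The recursive claim I would then establish, by a routine induction mirroring the chain that produced Eqn.~\ref{gateForm2}, is: after $j$ such conjugations one is left, up to multiplication by Cliffords, with the two-block gate $U^{2^{j-1}}\oplus U^{-2^{j-1}}$, and it lies in $\mathcal{CH}_{k-j}$ --- each step being a further application of Lemma~\ref{circ}, now to $U^{2^{j-1}}$, after rewriting the current two-block gate as a controlled gate dressed by a single-qubit power of $U$. Taking $j=k-1$ places $U^{2^{k-2}}\oplus U^{-2^{k-2}}$ in $\mathcal{CH}_1$, so Lemma~\ref{lem1} forces $U^{2^{k-2}}=\pm U^{-2^{k-2}}$ with both blocks Pauli strings. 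Setting $m:=k-1$, this says $U^{2^{m-1}}$ is a Pauli string and $(U^{2^{m-1}})^2=U^{2^m}=\pm I$, so up to a global phase $U^{2^m}=I$ and $U$ is order $2^m$, which is the theorem (the $k=1$ case, where ${\bf C}(U)$ is already a Pauli, follows directly from Lemma~\ref{lem1}, which gives $U=\pm I$, covered by $m=1$; and $k=2$ gives $U$ Pauli).

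The only genuine work is the bookkeeping in the inductive step: each use of Lemma~\ref{circ} produces a stray non-Clifford single-qubit factor $U^{\pm 1}$ on the target wire, and one must check that after absorbing it the object at each level is again a two-block gate of the advertised form --- so that Lemma~\ref{lem1} can be invoked at the bottom --- while correctly tracking the sign alternation in the exponent of the two blocks and the doubling of its magnitude. None of this is new content: it is the proof of Theorem~\ref{thm2.1} re-expressed as a chain of circuit identities, so I expect no obstacle beyond making the circuit manipulations precise.
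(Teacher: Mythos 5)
Your proposal is correct and follows essentially the same route as the paper's own proof: iterated conjugation by $X$ on the control wire via Lemma~\ref{circ}, doubling the exponent at each step until the result lands in $\mathcal{CH}_1$, where the block form forces $U^{2^{m-1}}$ to be Pauli and $U^{2^m}=\pm I$. Your only departures are cosmetic but welcome refinements --- absorbing the Clifford $X\otimes I$ at each step so that Lemma~\ref{lem1} can be invoked cleanly at the bottom instead of the paper's informal ``no non-trivial controlled gates at level one'' remark, and noting that Lemma~\ref{circ} as drawn computes ${\bf C}(U)^\dagger(X\otimes I){\bf C}(U)$ so one should apply it with $U\mapsto U^\dagger$ to match the definition of the hierarchy exactly.
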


An $n$-qubit gate, $V$, is in $\mathcal{CH}$ at level $k$ if $VPV^\dagger$ is in $\mathcal{CH}$ at level $k-1$ for all $n$-qubit Pauli strings $P$. 

For a gate, $U$ acting on $n$ qubits, to be in the Clifford Hierarchy at level $k$ ($\mathcal{CH}_k$), it must have $U P_i U^\dagger\in\mathcal{CH}_{k-1}$ for all $n$-qubit Pauli strings $P_i$. This requirement is both necessary and sufficient for membership in $\mathcal{CH}_k$.  

\begin{proof}
Now since a controlled gate has the same requirements for membership in $\mathcal{CH}_k$, we can see (using Lemma~\ref{circ} above) that for a controlled-$U$ gate to be in $\mathcal{CH}_k$ (for some finite level $k$) it is necessary that 

\begin{equation*}
\begin{array}{ccc}
\Qcircuit @C=0.5em @R=1.3em {
& \ctrl{1} & \gate{X} & \qw \\
& \gate{U^2} & \gate{U^\dagger} & \qw \\
}
\end{array}
\in \mathcal{CH}_{k-1}.
\end{equation*}

But for this to be true it must also hold that 
\begin{equation*}
\begin{array}{ccc}
\Qcircuit @C=0.5em @R=1.3em {
& \ctrl{1} & \gate{X} & \gate{X} & \gate{X} & \ctrl{1} & \qw \\
& \gate{U^2} & \gate{U^\dagger} & \qw  & \gate{U} & \gate{U^{-2}} & \qw\\
}
\end{array}
= 
\begin{array}{ccc}
\Qcircuit @C=0.5em @R=1.3em {
& \ctrl{1} & \gate{X} & \qw \\
& \gate{U^4} & \gate{U^{-2}} & \qw \\
}
\end{array}
\in \mathcal{CH}_{k-2}.
\end{equation*}
By repeated application of this identity, we see that for ${\bf C}(U)$ to be in $\mathcal{CH}$ all gates of the following form must also be in $\mathcal{CH}$:
\begin{equation}\label{gateForm}
\begin{array}{ccc}
\Qcircuit @C=0.5em @R=1.3em {
& \ctrl{1} & \gate{X} & \qw \\
& \gate{U^{2^m}} & \gate{U^{-{2^{m-1}}}} & \qw \\
}
\end{array}.
\end{equation}
Since we have assumed that this gate is in $\mathcal{CH}$ at some finite level, for some choice of $m$ in Eqn.~\ref{gateForm} the gate must be in the first level of $\mathcal{CH}$ which contains only Pauli gates. Since there are no (non-trivial) controlled gates at this level, we see that $U^{2^m}$ must equal identity (up to a global phase) and furthermore that $U^{-{2^{m-1}}}$ must be a Pauli string. Note that we can always choose an $m$ such that $U^{2^m}$ is the smallest $m$ where $U^{2^m}$ equals identity, then $U^{-{2^{m-1}}}$ must equal a non-identity Pauli string. We have now shown that for a gate ${\bf C}(U)$ to be in the Clifford Hierarchy, it is necessary that $U$ is order $2^m$ and that $U^{-{2^{m-1}}}$ is a Pauli string (a member of $\mathcal{CH}_1$). This concludes our proof. 
\end{proof}

\bibliographystyle{amsplain}

\begin{thebibliography}{10}

\bibitem{Gottesman1999}
Daniel Gottesman and Isaac~L. Chuang.
\newblock ``Demonstrating the viability of universal quantum computation using
  teleportation and single-qubit operations''.
\newblock \href{https://dx.doi.org/10.1038/46503}{Nature {\bf 402},
  390--393}~(1999).

\bibitem{Hu2021ClimbingHierarchy}
Jingzhen Hu and Qingzhong Liang and Robert Calderbank.
\newblock ``Climbing the Diagonal Clifford Hierarchy''.
\newblock \href{https://arxiv.org/abs/2110.11923}{arXiv:2110.11923}~(2021)

\bibitem{Anderson2024OnHierarchy}
Jonas~T. Anderson
\newblock ``On Groups in the Qubit Clifford Hierarchy''.
\newblock \href{https://doi.org/10.22331/q-2024-06-13-1370}{Quantum 8, 1370}~(2024).

\bibitem{Anderson2016ClassificationCodes}
Jonas~T. Anderson and Tomas Jochym-O'Connor.
\newblock ``Classification of transversal gates in qubit stabilizer codes''.
\newblock \href{https://dx.doi.org/10.26421/QIC16.9-10-3}{Quantum Information
  and Computation {\bf 16}, 771--802}~(2016).

\bibitem{Pllaha2020Un-Weyl-ingHierarchy}
Tefjol Pllaha  and Narayanan Rengaswamy and Olav Tirkkonen and Robert Calderbank.
\newblock ``Un-Weyl-ing the Clifford Hierarchy''.
\newblock \href{https://quantum-journal.org/papers/q-2020-12-11-370/}
{Quantum 4, 370}~(2020).

\bibitem{Rengaswamy2019UnifyingRings}
Narayanan Rengaswamy and Robert Calderbank and Henry~D. Pfister.
\newblock ``Unifying the Clifford hierarchy via symmetric matrices over rings''.
\newblock \href{https://link.aps.org/doi/10.1103/PhysRevA.100.022304}
{Physical Review A {\bf 100}, 022304}~(2019).

\bibitem{Zeng2008}
Bei Zeng, Xie Chen, and Isaac~L. Chuang.
\newblock ``Semi-clifford operations, structure of $c_k$ hierarchy, and gate
  complexity for fault-tolerant quantum computation''.
\newblock \href{https://dx.doi.org/10.1103/PhysRevA.77.042313}{Physical Review
  A {\bf 77}, 042313}~(2008).

\bibitem{Cui2017DiagonalHierarchy}
Shawn~X. Cui, Daniel Gottesman, and Anirudh Krishna.
\newblock ``Diagonal gates in the clifford hierarchy''.
\newblock \href{https://dx.doi.org/10.1103/PhysRevA.95.012329}{Physical Review
  A {\bf 95}, 012329}~(2017).

\bibitem{Beigi2010C3Operations}
S.~Beigi and P.W. Shor.
\newblock ``$C_3$, semi-clifford and genralized semi-clifford operations''.
\newblock \href{https://dx.doi.org/10.26421/QIC10.1-2-4}{Quantum Information
  and Computation {\bf 10}, 41--59}~(2010).

\bibitem{Bengtsson2014OrderHierarchy}
Ingemar Bengtsson, Kate Blanchfield, Earl Campbell, and Mark Howard.
\newblock ``Order 3 symmetry in the clifford hierarchy''.
\newblock \href{https://dx.doi.org/10.1088/1751-8113/47/45/455302}{Journal of
  Physics A: Mathematical and Theoretical {\bf 47}, 455302}~(2014).

\bibitem{deSilva2021EfficientDimensions}
Nadish de~Silva.
\newblock ``Efficient quantum gate teleportation in higher dimensions''.
\newblock \href{https://dx.doi.org/10.1098/rspa.2020.0865}{Proceedings of the
  Royal Society A: Mathematical, Physical and Engineering Sciences {\bf 477},
  20200865}~(2021).

\bibitem{Liu2023ApproximateCorrection}
Zi-Wen Liu and Sisi Zhou.
\newblock ``Approximate symmetries and quantum error correction''.
\newblock \href{https://dx.doi.org/10.1038/s41534-023-00788-4}{npj Quantum
  Information {\bf 9}, 119}~(2023).

\bibitem{Quantumcomputing.stackexchange.com:Matrices}
\newblock ``QuantumComputing.stackexchange: When is a block diagonal matrix a tensor product of Pauli matrices?''.

\end{thebibliography}


\end{document}